\newcommand{\shortcite}{\cite}
\newtheorem{proposition}{Proposition} 
\newtheorem{lemma}{Lemma}
\newtheorem{theorem}{Theorem}
\title{Information In The Non-Stationary Case}
\author{Vincent Q. Vu\footnotemark[2], 
                Bin Yu\footnotemark[2], 
                Robert E. Kass\footnotemark[3] \\
                {\footnotesize \tt \{vqv, binyu\}@stat.berkeley.edu, kass@stat.cmu.edu }\\
                \footnotemark[2]
                {\footnotesize Department of Statistics, University of California, Berkeley} 
                \\
                \footnotemark[3]
                {\footnotesize Department of Statistics and Center for the
                  Neural Basis of Cognition, Carnegie Mellon University}
}
\begin{document}

\pagenumbering{alph}
\begin{titlepage}
\maketitle
\doublespacing
\begin{abstract}
Information estimates such as the ``direct method'' of Strong et al.
(1998) sidestep the difficult problem of estimating the joint 
distribution of response and stimulus by instead estimating 
the difference between the marginal and conditional entropies
of the response. While this is an effective estimation
strategy, it tempts the practitioner to ignore the
role of the stimulus and the meaning of mutual information.
We show here that, as the number of trials increases indefinitely,
the direct (or ``plug-in'') estimate of marginal entropy 
converges (with probability 1) to the entropy of the time-averaged 
conditional distribution of the response, and the direct estimate
of the conditional entropy converges to the time-averaged entropy 
of the conditional distribution of the response. Under joint stationarity
and ergodicity of the response and stimulus, the difference of
these quantities converges to the mutual information. When the
stimulus is deterministic or non-stationary the direct estimate
of information no longer estimates mutual information, which
is no longer meaningful, but it remains a measure of variability 
of the response distribution across time.
\end{abstract}
\thispagestyle{empty}
\end{titlepage}

\clearpage\pagenumbering{arabic}
\doublespacing

\section{Introduction}
\label{sec:introduction}
Information estimates are used to characterize the amount of information
that a spike train contains about a stimulus \cite{Strong:1998,Borst:1999}.
They are motivated by information theory \cite{Shannon:1948} and widely believed to estimate the
mutual information (or mutual information rate) between stimulus and spike train response.
They are frequently calculated using data from experiments where the stimulus and response are  
dynamic and time-varying \shortcite{Hsu:2004,Reich:2001,Reinagel:2000,Nirenberg:2001}.

For mutual information to be properly defined, see for example \cite{Cover:1991aa}, 
the stimulus and response must be considered random, 
and when the estimates are obtained from time-averages, they should also be stationary and ergodic.
In practice these assumptions are usually tacit, and information estimates, 
such as the \emph{direct method} proposed by \cite{Strong:1998}, 
can be made without explicit consideration of the stimulus.
This can lead to misinterpretation.

The purpose of this note is to show that the direct method information estimate can be reinterpreted as 
the average divergence across time of the conditional response distribution from its overall mean; 
in the absence of stationarity and ergodicity: 
\begin{enumerate}
	\item information estimates do not necessarily estimate mutual information, but
	\item potentially useful interpretations can still be made by referring back to the time-varying divergence.
\end{enumerate}
Although our results are specialized to the direct method with the plug-in entropy estimator, 
they should hold more generally regardless of the choice of entropy estimator.
\footnote{See \cite{Victor2006Approaches-to-I} for a recent review of existing entropy estimators.}

The fundamental issue concerns stationarity: methods that assume stationarity are unlikely to be
appropriate when stationarity appears to be violated. In the non-stationary case, our second result should
be of use, as would be other methods that explicitly consider the dynamic and non-stationary nature of the
stimulus and response; see for instance \cite{Barbieri:2004}.

We begin with a brief review of the direct method and plug-in entropy estimator.
This is followed by results showing that the information estimate can be recast as a time-average.
This characterization leads us to the interpretation that the information estimate is actually a measure 
of variability of the stimulus conditioned response distribution.
This observation is first made in the finite number of trials case, and then formalized by 
a theorem describing the limiting behavior of the information estimate as the number of trials 
tends to infinity.  
Following the theorem is discussion about the interpretation of the limit, and examples that illustrate the interpretation with a proposed graphical plot.

\section{Review of the direct method}
In the direct method a time-varying stimulus is chosen by the experimenter and then
repeatedly presented to a subject over multiple trials. The observed responses are
conditioned by the same stimulus.
Two types of variation in the response are considered:
\begin{enumerate}
	\item variation across time (potentially related to the stimulus), and
	\item trial-to-trial variation.
\end{enumerate}
Figure~1(a) shows an example of data from such an experiment.
The upper panel is a raster plot of the response of a Field L neuron of an adult male Zebra Finch during synthetic song stimulation.  
The lower panel is a plot of the audio signal corresponding to the natural song. 
Details of the experiment can be found in \cite{Hsu:2004}.

Let us consider the random process $\{ S_t, R_t^k \}$ representing the value of the stimulus
and response at time $t=1,\ldots,n$ during trial $k=1,\ldots, m$.
The response is made discrete by dividing time into bins of size $dt$ and then considering \emph{words} (or patterns) of spike counts formed within intervals (overlapping or non-overlapping) of $L$ adjacent time bins.
The number of spikes that occur in each time bin become the letters in the words.
$R_t^k$ corresponds to these words, and may belong to a countably infinite set (because the number of spikes in a bin is theoretically unbounded).
In the raster plot of Figure~1(a) the time bin size is $dt = 1$ millisecond, and the vertical lines demarcate non-overlapping words of length $L=10$ time bins.

Given the responses $\{ R_t^k\}$, the direct method considers two different entropies:
\begin{enumerate}
	\item the \emph{total entropy} $H$ of the response, and
	\item the local \emph{noise entropy} $H_t$ of the response at time $t$.
\end{enumerate}
The total entropy is associated with the stimulus conditioned distribution of the response across all
times and trials. The local noise entropy is associated with the stimulus conditioned distribution of
the response at time $t$ across all trials.  
These quantities are calculated directly from the neural response, 
and the difference between the total entropy and the average
(over $t$) noise entropy is what \cite{Strong:1998} call ``the information that the spike train
provides about the stimulus.''

$H$ and $H_t$ depend implicitly on the length $L$ of the words.  Normalizing by $L$ and considering large 
$L$ leads to the total and local entropy rates 
that are defined to be $\lim_{L\to\infty} H(L) / L$ and $\lim_{L\to\infty} H_t(L) / L$, respectively, when they exist.  
The direct method of \shortcite{Strong:1998} prescribed an extrapolation for estimating these limits, 
however they do not necessarily exist when the stimulus and response process are non-stationary. 
When there is stationarity, estimation of entropy for large $L$ is potentially difficult, and extrapolation from a few small choices of $L$ can be suspect.  
Since we are primarily interested in the non-stationary case, we do not address these issues 
and refer the reader to \shortcite{Kennel:2005,Gao:2006} for larger discussion on the stationary case.
For notational simplicity, the dependence on $L$ will be suppressed in the remainder of the text.

\paragraph{The plug-in entropy estimate}
\cite{Strong:1998} proposed estimating $H$ and $H_t$ by plug-in with the
corresponding empirical distributions:
\begin{equation}
	\label{eq:phat}
	\hat{P}(r) 
	:= \frac{1}{mn} \sum_{t=1}^n \sum_{k=1}^m 1_{\{R_t^k = r\}}
\end{equation}
and
\begin{equation}
	\label{eq:pthat}
	\hat{P}_t(r)
	:= \frac{1}{m} \sum_{k=1}^m 1_{\{R_t^k = r\}}.
\end{equation}
Note that $\hat{P}$ is also the average of $\hat{P}_t$ across $t=1,\ldots,n$.
So the direct method \emph{plug-in} estimates\footnote{\cite{Strong:1998} used the name \emph{naive
estimates}.} of $H$ and $H_t$ are
\begin{equation}
	\hat{H} 
	:= -\sum_r \hat{P}(r) \log \hat{P}(r),
\end{equation}
and 
\begin{equation}
	\hat{H}_t
	:= -\sum_r \hat{P}_t(r) \log \hat{P}_t(r),
\end{equation}
respectively.
The direct method plug-in information estimate is
\begin{equation}
	\label{eq:infodifference}
	\hat{I} := \hat{H} - \frac{1}{n} \sum_{t=1}^n \hat{H}_t .
\end{equation}

\section{Results}
\label{sec:results}
The direct method information estimate 
is not only the difference of entropies shown in (\ref{eq:infodifference}), 
but also a time-average of divergences.
The empirical distribution of response across all trials and times (\ref{eq:phat})
is equal to the average of $\hat{P}_t$ over time.
That is $\hat{P}(r) = n^{-1} \sum_{t=1}^n \hat{P}_t(r)$ and so
\begin{align}
	\hat{I} 
	&= \hat{H} - \frac{1}{n} \sum_{t=1}^n \hat{H}_t \\
	&= \frac{1}{n} \sum_{t=1}^n \sum_r \hat{P}_t(r) \log \hat{P}_t(r) - \sum_r \left[\frac{1}{n} \sum_{t=1}^n \hat{P}_t(r) \right] \log \hat{P}(r) \\
	&= \frac{1}{n} \sum_{t=1}^n \sum_r \hat{P}_t(r) \log \hat{P}_t(r) - \frac{1}{n} \sum_{t=1}^n \sum_r \hat{P}_t(r) \log \hat{P}(r) \\
	\label{eq:kldivergence}
	&= \frac{1}{n} \sum_{t=1}^n \sum_r \hat{P}_t(r) \log \frac{\hat{P}_t(r)}{\hat{P}(r)}.
\end{align}
The quantity that is averaged over time in (\ref{eq:kldivergence}) is the Kullback-Leibler divergence
between the empirical time $t$ response distribution $\hat{P}_t$ 
and the average empirical response distribution $\hat{P}$.

Since the same stimulus is repeatedly presented to the subject, and there is no evolution in the response, over multiple trials, 
the following \emph{repeated trial assumption} is natural:
\begin{quote}
	Conditional on the stimulus $\{S_t\}$ the $m$ 
	trials $\{S_t, R_t^1\}, \ldots, \{S_t, R_t^m\}$ are independent and identically 
	distributed (i.i.d.).
\end{quote}
Under this assumption $1_{\{R_t^1 = r\}}, \ldots, 1_{\{R_t^m = r\}}$ 
are conditionally i.i.d. for each fixed $t$ and $r$.  
Furthermore, the law of large numbers guarantees that as the number of 
trials $m$ increases the empirical response distribution $\hat{P}_t(r)$ converges to its conditional expected value
for each fixed $t$ and $r$.
Thus $\hat{P}_t(r)$ and $\hat{P}(r)$ can be viewed as estimates of $P_t(r |S_1,\ldots,S_n)$, 
defined by
\begin{equation}
	P_t(r |S_1,\ldots,S_n) := P(R_t^k = r | S_1,\ldots,S_n) = E\{\hat{P}_t(r) | S_1,\ldots,S_n\},
\end{equation}
and $\bar{P}(r | S_1,\ldots, S_n)$, defined by 
\begin{equation}
	\bar{P}(r | S_1,\ldots, S_n) 
	:= \frac{1}{n} \sum_{t=1}^n P_t(r | S_1,\ldots,S_n),
\end{equation}
respectively.  
$\bar{P}$ is average response distribution across time $t=1,\ldots,n$ conditional on the entire stimulus $\{S_1,\ldots,S_n\}$.

So the quantity that is averaged over time in (\ref{eq:kldivergence}) 
should be viewed as a plug-in estimate of the Kullback-Leibler divergence between $P_t$ and $\bar{P}$.  
We emphasize this by writing 
\begin{equation}
	\hat{D}(P_t || \bar{P})
	:= \sum_r \hat{P}_t(r) \log \frac{\hat{P}_t(r)}{\hat{P}(r)}.
\end{equation}
This observation will be formalized by the theorem of the next section.
For now we summarize the above with a proposition.
\begin{proposition}\label{pro:klidentity}
The information estimate is the time-average 
$\hat{I} = \frac{1}{n} \sum_{t=1}^n \hat{D}(P_t || \bar{P})$.
\end{proposition}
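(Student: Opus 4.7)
The proposition is essentially an algebraic identity, and the computation has already been carried out in the chain of equalities leading up to (\ref{eq:kldivergence}); my plan is simply to organize those steps into a self-contained derivation and match the final expression to the definition of $\hat{D}(P_t \| \bar{P})$.

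The plan is to start from the definition $\hat{I} = \hat{H} - \frac{1}{n}\sum_{t=1}^n \hat{H}_t$ in (\ref{eq:infodifference}) and rewrite $\hat{H}$ using the key observation that $\hat{P}$ is the time-average of $\hat{P}_t$. First, I would expand $\hat{H} = -\sum_r \hat{P}(r)\log\hat{P}(r)$ and substitute $\hat{P}(r) = \frac{1}{n}\sum_{t=1}^n \hat{P}_t(r)$ into the first factor only, obtaining $\hat{H} = -\frac{1}{n}\sum_{t=1}^n \sum_r \hat{P}_t(r)\log\hat{P}(r)$. Next, I would expand $\hat{H}_t = -\sum_r \hat{P}_t(r)\log\hat{P}_t(r)$ and subtract the averaged noise entropy, so that the two $\frac{1}{n}\sum_t \sum_r$ sums combine into a single double sum with integrand $\hat{P}_t(r)\log\hat{P}_t(r) - \hat{P}_t(r)\log\hat{P}(r)$.

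Factoring $\hat{P}_t(r)$ out of the bracket and collapsing the logarithms into $\log(\hat{P}_t(r)/\hat{P}(r))$ produces
\begin{equation*}
\hat{I} = \frac{1}{n}\sum_{t=1}^n \sum_r \hat{P}_t(r)\log\frac{\hat{P}_t(r)}{\hat{P}(r)}.
\end{equation*}
The inner sum over $r$ is, by definition, $\hat{D}(P_t \| \bar{P})$, so the proposition follows immediately.

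I do not expect any real obstacle: the identity requires only that $\hat{P}$ be the time-average of $\hat{P}_t$, which is immediate from (\ref{eq:phat}) and (\ref{eq:pthat}), and no measure-theoretic or convergence issues arise because the sum over $r$ has only finitely many nonzero terms (the empirical distributions are supported on the finitely many words actually observed across the $mn$ trial-time pairs). The only mild care needed is to note that the convention $0\log 0 = 0$ ensures all terms are well-defined, and that $\hat{P}_t(r) > 0$ implies $\hat{P}(r) > 0$, so no $\log 0$ appears in the denominator of the divergence.
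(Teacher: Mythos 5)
Your proposal is correct and follows essentially the same route as the paper, which derives the identity by substituting $\hat{P}(r)=\frac{1}{n}\sum_{t=1}^n\hat{P}_t(r)$ into $\hat{H}$ and combining the resulting double sums into the Kullback--Leibler form of (\ref{eq:kldivergence}). Your added remarks on the $0\log 0=0$ convention and on $\hat{P}_t(r)>0$ implying $\hat{P}(r)>0$ are sensible refinements but do not change the argument.
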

This decomposition of the information estimate is analogous to the decomposition 
of mutual information that \cite{Deweese:1999} call the ``specific surprise,''
while ``specific information'' is analogous to the alternative decomposition,
\begin{equation}
	\label{eq:specificinfo}
	\hat{I} = \frac{1}{n} \sum_{t=1}^n [ \hat{H} - \hat{H}_t ] .
\end{equation}
An important difference is that here the stimulus itself is a function of time and the 
decompositions are given in terms of time-dependent quantities.
It is possible that these quantities can reveal dynamic aspects of the stimulus and response 
relationship.  
This will be explored further in Sections~\ref{sub:interpretation_in_non_stationary_case} and \ref{subsec:visual}.

\subsection{What is being estimated?}
There are two directions in which the amount of observed response data can be increased:
length of time $n$, and number of trials $m$.  The information estimate is the average 
of $\hat{D}(P_t || \bar{P})$ over time, and may not necessarily converge as $n$ increases.
This could be due to $\{S_t, R_t^k\}$ being non-stationary and/or highly 
dependent in time.
Even when convergence may occur, the presence of serial correlation in $\hat{D}(P_t || \bar{P})$
(see the autocorrelation in panel (b) of Figures 2 for example)  
can make assessments of uncertainty in $\hat{I}$ difficult.

Assuming that the stimulus and response process is stationary and not too dependent in time  
could guarantee convergence, but this could be unrealistic.
On the other hand, the repeated trial assumption is appropriate if 
the same stimulus is repeatedly presented to the subject over multiple trials.
It is also enough to guarantee that the information estimate converges as the number of trials $m$ increases.
We prove the following theorem in the appendix.
\begin{theorem}\label{thm:information}
	Suppose that $P_t$ has finite entropy for all $t=1,\ldots, n$. 
	Then under the repeated trial assumption
	\begin{equation*}
		\lim_{m\to\infty} \hat{I} 
		= H(\bar{P}) - \frac{1}{n} \sum_{t=1}^n H(P_t) 
		= \frac{1}{n}\sum_{t=1}^n [H(\bar{P}) - H(P_t)]
		= \frac{1}{n}\sum_{t=1}^n D(P_t||\bar{P})
	\end{equation*} 
	with probability 1, 
	and in particular the following statements hold uniformly for $t=1,\ldots,n$ with probability 1:
	\begin{enumerate}
		\item $\lim_{m\to\infty} \hat{H} = H(\bar{P})$, 
		\item $\lim_{m\to\infty} \hat{H}_t = H(P_t)$, and 
		\item $\lim_{m\to\infty} \hat{D}(P_t || \bar{P}) = D(P_t || \bar{P})$ for $t=1,\ldots,n$, 
		\label{thm:information:kl}
	\end{enumerate}
	where $D(P_t || \bar{P})$ is the Kullback-Leibler divergence defined by, 
	\begin{equation*}
		D(P_t || \bar{P}) 
		:= \sum_r P_t(r |S_1,\ldots,S_n) \log \frac{P_t(r |S_1,\ldots,S_n)}{\bar{P}(r | S_1,\ldots, S_n)} ,
	\end{equation*}
	and $H(P)$ is the entropy of the distribution $P$, defined by 
	\begin{equation*}
		H(P) := -\sum_r P(r) \log P(r) .
	\end{equation*}
\end{theorem}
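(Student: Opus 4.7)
My proof plan has two main steps: strong law of large numbers for the empirical p.m.f.s, and a passage from their pointwise convergence to convergence of the entropy and divergence functionals.

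First, conditioning on the stimulus $(S_1,\ldots,S_n)$ and using the repeated trial assumption, for each fixed $t$ and $r$ the indicators $\mathbf 1_{\{R_t^k=r\}}$ are i.i.d.\ in $k$ with mean $P_t(r \mid S_1,\ldots,S_n)$, so the SLLN gives $\hat P_t(r)\to P_t(r\mid\cdot)$ almost surely. Taking the countable union over the pairs $(t,r)$, the exceptional null set remains null, so convergence holds simultaneously for all $(t,r)$ with probability one; in particular $\hat P(r)=n^{-1}\sum_t\hat P_t(r)\to\bar P(r\mid\cdot)$ for each $r$.

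Second, I would promote pointwise convergence to entropy convergence. This is the crux and is precisely where the finite-entropy hypothesis enters, since on a countably infinite alphabet the entropy functional is not continuous in either pointwise or total-variation topology. For statement~2 ($\hat H_t\to H(P_t)$), I would invoke the strong consistency of the plug-in entropy estimator for i.i.d.\ samples from a finite-entropy discrete distribution (Antos and Kontoyiannis, 2001) applied conditionally on $S$; uniformity in $t$ is free because $n<\infty$. For statement~3 I would write $\hat D(P_t\|\bar P)=-\hat H_t-\sum_r \hat P_t(r)\log\hat P(r)$ and use the elementary bound $\hat P(r)\ge \hat P_t(r)/n$ to obtain $0\le -\hat P_t(r)\log\hat P(r)\le -\hat P_t(r)\log\hat P_t(r)+\hat P_t(r)\log n$. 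The dominator has $r$-sum $\hat H_t+\log n$, which by statement~2 converges to $H(P_t)+\log n$, the $r$-sum of its pointwise limits. A Pratt-type lemma (if $0\le f_m\le g_m$ with $f_m\to f$, $g_m\to g$ pointwise and $\sum g_m\to\sum g<\infty$, then $\sum f_m\to\sum f$, which follows from applying Fatou separately to $f_m$ and to $g_m-f_m$) then gives $\sum_r\hat P_t(r)\log\hat P(r)\to\sum_r P_t(r)\log\bar P(r)$, hence $\hat D(P_t\|\bar P)\to D(P_t\|\bar P)$ a.s., uniformly in $t$.

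Finally, statement~1 and the main display follow by assembly. Proposition~\ref{pro:klidentity} gives $\hat I=n^{-1}\sum_t\hat D(P_t\|\bar P)\to n^{-1}\sum_t D(P_t\|\bar P)$; the identity $\hat H=\hat I+n^{-1}\sum_t\hat H_t$ together with the standard mixture-entropy decomposition $H(\bar P)=n^{-1}\sum_t H(P_t)+n^{-1}\sum_t D(P_t\|\bar P)$ then yields $\hat H\to H(\bar P)$. The main obstacle is the entropy-convergence step for $\hat H_t$: because no Fannes-type inequality bounds entropy differences by total variation on an infinite alphabet, one genuinely needs the Antos-Kontoyiannis consistency result, or an equivalent truncation-plus-concentration argument that uses $H(P_t)<\infty$ to force the tail contribution to vanish as $m\to\infty$.
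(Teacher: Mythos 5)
Your proposal is correct and follows essentially the same route as the paper: the SLLN for the empirical probabilities, the Antos--Kontoyiannis consistency result for $\hat H_t$, and a Pratt-type dominated convergence argument with the bound $\hat P(r)\ge \hat P_t(r)/n$ to handle the cross-entropy term $-\sum_r\hat P_t(r)\log\hat P(r)$. The only (immaterial) difference is that you recover statement~1 from the identity $\hat H=\hat I+n^{-1}\sum_t\hat H_t$ and the mixture-entropy decomposition, whereas the paper obtains it by averaging the cross-entropy convergence over $t$ directly.
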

Note that if stationary and ergodicity do hold, then $P_t$ for $t=1,\ldots,n$ is also stationary 
and ergodic\footnote{$P_t$ and $\bar{P}$ are stimulus conditional distributions,
 and hence random variables potentially depending on $S_1,\ldots,S_n$.}.  
So its average, $\bar{P}(r)$, is guaranteed by the ergodic theorem to converge pointwise 
to $P(R_1^1 = r)$ as $n\to\infty$.  Moreover, if $R_1^1$ can only take on a finite number of values, then 
$H(\bar{P})$ also converges to the marginal entropy $H(R_1^1)$ of $R_1^1$.  
Likewise, the average of the conditional entropy $H(P_t)$ also converges to the 
expected conditional entropy: $\lim_{n\to\infty} H(R_n^1 | S_1, \ldots, S_n)$.  
So in this case the information estimate does indeed estimate mutual information.

However, the primary consequence of the theorem is that, in the absence of stationarity and ergodicity,
the information estimate $\hat{I}$ does not necessarily estimate mutual information.
The three particular statements show that the time-varying quantities $[\hat{H} - \hat{H}_t]$
and $\hat{D}(P_t || \bar{P})$ converge individually to the appropriate limits,
and justify our assertion that the information estimate is a time-average of 
plug-in estimates of the corresponding time-varying quantities.
Thus, the information estimate can always be viewed as an estimate of the time-average of 
either $D(P_t || \bar{P})$ or $[H(P) - H(P_t)]$--stationary and ergodic or not.

\subsection{The information estimate measures variability of the response distribution}
\label{sub:interpretation_in_non_stationary_case}

The Kullback-Leibler Divergence $D(P_t || \bar{P})$ has a simple interpretation:
it measures the dissimilarity of the time $t$ response distribution $P_t$ from 
its overall average $\bar{P}$.  So as a function of time, $D(P_t || \bar{P})$ 
measures how the conditional response distribution varies across time, relative to its overall mean.  
This can be seen in a more familiar form by considering the leading term of the Taylor expansion,
\begin{equation}
	D(P_t || \bar{P})
	=  
	\frac{1}{2} \sum_r 
	\frac{[P_t(r | S_1,\ldots, S_n) - \bar{P}(r | S_1,\ldots, S_n)]^2}{\bar{P}(r | S_1,\ldots, S_n)}
	+ \cdots.
\end{equation}
Thus, its average is in this sense a measure of the average variability of the 
response distribution.

It is, of course, possible that characteristics of the response are
due to confounding factors rather than the stimulus.  Furthermore, the
presence of additional noise in either process would weaken a measured
relationship between stimulus and response, compared to its strength
if the noise were eliminated. Setting these concerns aside, the
variation of the response distribution $P_t$ about its average
provides information about the relationship between the stimulus and
the response. In the stationary and ergodic case, this information may
be averaged across time to obtain mutual information. In more general
settings averaging across time may not provide a complete picture of
the relationship between stimulus and response. Instead, we suggest
examining the time-varying $D(P_t || \bar{P})$ directly, via graphical
display as discussed next.

\subsection{Plotting the divergence}
\label{subsec:visual}
The plug-in estimate $\hat{D}(P_t || \bar{P})$ is an obvious choice for estimating $D(P_t||\bar{P})$, 
but it turns out that estimating $D(P_t || \bar{P})$ is akin to estimating entropy.
Since the trials are conditionally i.i.d., the coverage adjustment method described in \cite{Vu:2007} 
can be used to improve estimation of $D(P_t || \bar{P})$ over the plug-in estimate.  
The appendix contains the details of this.

Figures 1 and 2 show the responses of the same Field L neuron of an adult male Zebra Finch 
under two different stimulus conditions.  
Details of the experiment and the statistics of the stimuli are described in \cite{Hsu:2004}.
Panel (a) of the figures shows the stimulus and response data.
In Figure~1 the stimulus is synthetic and stationary by construction, while in Figure~2 the stimulus is a natural song.
Panel (b) of the figures shows the coverage adjusted estimate of the divergence $D(P_t || \bar{P})$ plotted as a function of time.
95\% confidence intervals were formed by bootstrapping entire trials, i.e. an entire trial is 
either included in or excluded from a bootstrap sample.

The information estimate going along with each Divergence plot is the average of the solid curve
representing the estimate of $D(P_t || \bar{P})$.  
It is equal to 0.77 bits (per 10 millisecond word) in Figure~1(b) and 0.76 bits (per 10 millisecond word) in Figure~2(b).
Although the information estimates are nearly identical, the two plots are very different.

In the first case, the stimulus is stationary by construction and it appears that the 
time-varying divergence is too.  Its fluctuations appear to be roughly of the same scale across time, 
and its local mean is relatively stable.  The average of the solid curve seems to be a fair summary.

In the second case the stimulus is a natural song.  
The isolated bursts of the time-varying divergence and relatively flat regions in Figure~2(b) suggest 
that the response process (and the divergence) is non-stationary and has strong serial correlations.
The local mean of the divergence also varies strongly with time.
Summarizing $D(P_t ||\bar{P})$ by its time-average hides the time-dependent 
features of the plot.

More interestingly, when the divergence plot is compared to the plot of the stimulus in Figure~2,
there is a striking coincidence between the location of large isolated values of the 
estimated divergence and visual features of the stimulus waveform.  
They tend to coincide with the boundaries of the bursts in the stimulus signal.
This suggests that the spike train may carry information about the onset/offset of bursts 
in the stimulus.  
We discussed this with the Theunissen Lab and they confirmed from their STRF models 
that the cell in the example is an offset cell.
It tends to fire at the offsets of song syllables--the bursts of energy in the stimulus waveform.
They also suggested that a word length within the range of 30--50 milliseconds is a better match to the length of correlations in the auditory system.  
We regenerated the plots for words of length $L=40$ (not shown here) and found that the isolated structures in the divergence plot became even more pronounced.


\section{Discussion}
\label{sec:discussion}
Estimates of mutual information, including the plug-in estimate, may
be viewed as measures of the strength of the relationship between the
response and the stimulus when the stimulus and response are jointly
stationary and ergodic. Many applications, however, use non-stationary
or even deterministic stimuli, so that mutual information is no longer
well defined. In such non-stationary cases do estimates of mutual
information become meaningless?  We think not, but the purpose of this
note has been to point out the delicacy of the situation, and to
suggest a viable interpretation of information estimates, along with
the divergence plot, in the non-stationary case.
 
In using stochastic processes to analyze data there is an implicit
practical acknowledgment that assumptions cannot be met precisely:
the mathematical formalism is, after all, an abstraction imposed on
the data; the hope is simply that the variability displayed by the
data is similar in relevant respects to that displayed by the
presumptive stochastic process. The ``relevant respects'' involve the
statistical properties deduced from the stochastic assumptions.  The
point we are trying to make is that highly non-stationary stimuli make
statistical properties based on an assumption of stationarity highly
suspect; strictly speaking, they become void.

To be more concrete, let us reconsider the snippet of natural song and
response displayed in Figure 2. When we look at the less than 2
seconds of stimulus amplitude given there, the stimulus is not at all
time-invariant: instead, the stimulus has a series of well-defined
bursts followed by periods of quiescence. Perhaps, on a very much
longer time scale, the stimulus would look stationary. But a good
stochastic model on a long time scale would likely require long-range
dependence.  Indeed, it can be difficult to distinguish
non-stationarity from long-range dependence \cite{Kunsch:1986}, and
the usual statistical properties of estimators are known to breakdown
when long-range dependence is present \cite{Beran:1994}.  Given a
short interval of data, valid statistical inference under stationarity
assumptions becomes highly problematic.  To avoid these problems we have
proposed the use of the divergence plot, and a recognition that the
``bits per second'' summary is no longer mutual information in the
usual sense. Instead we would say that the estimate of information
measures magnitude of variation of the response as the stimulus
varies, and that this is a useful assessment of the extent to which
the stimulus affects the response as long as other factors that affect
the response are themselves time-invariant.  In other deterministic or
non-stationary settings the argument for the relevance of an
information estimate should be analogous.  Under stationarity and
ergodicity, and indefinitely many trials, the stimulus sets that
affect the response---whatever they are---will be repeatedly sampled,
with appropriate probability, to determine the variability in the
response distribution, with time-invariance in the response being
guaranteed by the joint stationarity condition.  This becomes part of
the intuition behind mutual information.  In the deterministic or
non-stationary settings information estimates do not estimate mutual
information, but they may remain intuitive assessments of strength of
effect.

\section*{Acknowledgments}
The authors thank the Theunissen Lab at the University of California, Berkeley for providing the data set 
and helpful discussion.  
They also thank an anonymous reviewer for comments that greatly improved the manuscript.
V.~Q.~Vu was supported by a NSF VIGRE Graduate Fellowship and NIDCD grant DC 007293. B.~Yu
was supported by NSF grants DMS-03036508, DMS-0605165, DMS-0426227, ARO grant
W911NF-05-1-0104, NSFC grant 60628102, and a fellowship from the John Simon Guggenheim Memorial 
Foundation. This work began while Kass was a Miller Institute Visiting Research Professor at the 
University of California, Berkeley. Support from the Miller Institute is greatly appreciated. Kass's
work was also supported in part by NIMH grant RO1-MH064537-04.

\appendix

\section{Appendix}
\subsection{Coverage adjusted estimate of $D(P_t || \bar P)$}
The main idea behind coverage adjustment is to adjust estimates for potentially 
unobserved values.  This happens in two places: estimation of $P_t$ and estimation of $D(P_t||\bar P)$.
In the first case, unobserved values affect the amount of weight that $\hat{P}_t$,
defined in (\ref{eq:pthat}) in the main text,
 places on observed 
values.  In the second case unobserved values correspond to missing summands when plugging 
$\hat{P}_t$ into the Kullback-Leibler divergence.  
\cite{Vu:2007} gives a more thorough explanation of these ideas.  Let
\begin{equation}
	N_t(r) := \sum_{k=1}^m 1_{\{R_t^k = r\}}.
\end{equation}
The sample coverage, or total $P_t$-probability of observed values $r$, is estimated by 
$\hat{C}_t$ defined by 
\begin{equation}
	\hat{C}_t := 1 - \frac{\#\{r : N_t(r)=1 \} + .5}{m+1}.
\end{equation}
The number in the numerator of the fraction refers to the number of singletons---patterns that were observed only once across the $m$ trials at time $t$.
Then the coverage adjusted estimate of $P_t$ is the following shrunken version of $\hat{P}_t$:
\begin{equation}
	\tilde{P}_t(r) = \hat{C}_t \hat{P}_t(r).
\end{equation}
$\bar{P}$ is estimated by simply averaging $\tilde{P}_t$:
\begin{equation}
	\tilde{P}(r) = \frac{1}{n} \sum_{t=1}^n \tilde{P}_t(r).
\end{equation}
The coverage adjusted estimate of $D(P_t||\bar{P})$ is obtained by plugging  
$\tilde{P}_t$ and $\tilde{P}$ into the Kullback-Leibler divergence, 
but with an additional weighting on the summands according to 
the inverse of the estimated probability that the summand is observed:
\begin{equation}
	\tilde{D}(P_t || \bar{P})
	:= \sum_r \frac{\tilde{P}_t(r)\{\log\tilde{P}_t(r) - \log \tilde{P}(r)\}}{1-(1-\tilde{P}_t(r))^m}
	.
\end{equation}
The additional weighting is to correct for potentially missing summands.  (This is also explained in detail in \cite{Vu:2007}.)
Confidence intervals for $D(P_t || \bar{P})$ can be obtained by bootstrap sampling entire trials,
and applying $\tilde{D}$ to the bootstrap replicate data.

\subsection{Proofs}
We will use the following extension of the Lebesgue Dominated Convergence Theorem in the proof of 
Theorem~\ref{thm:information}.
\begin{lemma}
	\label{lem:dct}
	Let $f_m$ and $g_m$ for $m=1, 2, \ldots$ 
	be sequences of measurable, integrable functions defined on a measure space equipped with 
	measure $\mu$, and with pointwise limits $f$ and $g$, respectively. 
	Suppose further that $|f_m| \leq g_m$ and $\lim_{m\to\infty} \int g_m \ d\mu = \int g \ d\mu < \infty$.
	Then $$\lim_{m\to\infty} \int f_m \ d\mu = \int \lim_{m\to\infty} f_m  \ d\mu .$$
	\begin{proof}
		By linearity of the integral, 
		\begin{equation*}
			\liminf_{n\to\infty} \int (g + g_m) \ d\mu - \limsup_{n\to\infty} \int |f - f_m| \ d\mu
			= \liminf_{n\to\infty} \int (g + g_m) - |f - f_m| \ d\mu.
		\end{equation*}
		Since $0 \leq (g + g_m) - |f - f_m|$, Fatou's Lemma implies
		\begin{equation*}
			\liminf_{n\to\infty} \int (g + g_m) - |f - f_m| \ d\mu
			\geq \int \liminf_{n\to\infty} (g + g_m) - |f - f_m| \ d\mu .
		\end{equation*}
		The limit inferior on the inside of the right-hand integral is equal to $2g$ by assumption.
		Combining with the previous two displays and the assumption that 
		$\int g_m \ d\mu \to \int g \ d\mu$ gives 
		\begin{equation*}
			\limsup_{n\to\infty} |\int f  d\mu - \int f_m d\mu| 
			\leq \limsup_{n\to\infty} \int |f - f_m| d\mu 
			\leq 0.
		\end{equation*}
	\end{proof}
\end{lemma}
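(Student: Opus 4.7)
The plan is to reduce this to an application of Fatou's lemma applied to a cleverly chosen nonnegative sequence, following the standard template used to prove the classical Dominated Convergence Theorem but with the twist that the dominator $g_m$ itself varies with $m$. The central observation is that, although $|f_m|$ is dominated only by $g_m$ (not a single fixed integrable function), the difference $|f_m - f|$ is dominated by $g_m + g$, and $\int (g_m + g)\,d\mu$ converges to the finite limit $2\int g\,d\mu$ by hypothesis. That is exactly the slack needed to make a Fatou argument work.

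First I would establish that $|f| \le g$ pointwise (a.e.), by taking pointwise limits in $|f_m| \le g_m$; in particular $f$ is integrable since $g$ is. Next I would define the nonnegative sequence
\begin{equation*}
  h_m := g_m + g - |f_m - f|,
\end{equation*}
which is $\ge 0$ a.e.\ because $|f_m - f| \le |f_m| + |f| \le g_m + g$. At every point of pointwise convergence of $f_m$ and $g_m$, one has $h_m \to 2g$, so $\liminf_m h_m = 2g$ a.e.

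Then I would apply Fatou's lemma to $\{h_m\}$:
\begin{equation*}
  \int 2g\,d\mu \;=\; \int \liminf_m h_m\,d\mu \;\le\; \liminf_m \int h_m\,d\mu \;=\; \liminf_m\Bigl(\int g_m\,d\mu + \int g\,d\mu - \int |f_m - f|\,d\mu\Bigr).
\end{equation*}
Here I would use the hypothesis $\int g_m\,d\mu \to \int g\,d\mu < \infty$ to pull the first two terms through the liminf and convert the remaining liminf of a negative into a limsup, obtaining
\begin{equation*}
  2\int g\,d\mu \;\le\; 2\int g\,d\mu - \limsup_m \int |f_m - f|\,d\mu.
\end{equation*}
Subtracting (which is legal precisely because $\int g\,d\mu$ is finite) forces $\limsup_m \int |f_m - f|\,d\mu \le 0$, hence $\int |f_m - f|\,d\mu \to 0$. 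The conclusion follows from $\bigl|\int f_m\,d\mu - \int f\,d\mu\bigr| \le \int |f_m - f|\,d\mu$.

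The only delicate step is the rearrangement in the Fatou inequality: one must verify that each piece is finite before subtracting, since infinite cancellations would invalidate the bound. The finiteness hypothesis $\int g\,d\mu < \infty$ is what ensures both $\int g_m\,d\mu$ is eventually finite and that the cancellation is legitimate; this is the main place where the classical DCT proof has to be modified to accommodate the varying dominator.
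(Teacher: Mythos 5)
Your proposal is correct and follows essentially the same route as the paper: both apply Fatou's lemma to the nonnegative sequence $(g_m + g) - |f_m - f|$, identify its pointwise liminf as $2g$, and use $\int g_m\,d\mu \to \int g\,d\mu < \infty$ to extract $\limsup_m \int |f_m - f|\,d\mu \le 0$. You additionally spell out the justification that $|f| \le g$ a.e.\ (hence $f$ is integrable and the sequence is genuinely nonnegative), a detail the paper leaves implicit.
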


\begin{proof}[Proof of Theorem~\ref{thm:information}]
\label{pf:nonstationary}
The main statement of the theorem is implied by the three numbered statements together with 
Proposition~\ref{pro:klidentity}.  We start with the second numbered statement.
Under the repeated trial assumption, $R_t^1,\ldots, R_t^m$ are conditionally i.i.d. 
given the stimulus $\{S_t\}$.  
So Corollary 1 of \cite{Antos2001Convergence-Pro}, can be applied to show that
\begin{align}
	\lim_{m\to\infty} \hat{H}_t
	\label{eq:sumconverges}
	&= \lim_{m\to\infty} -\sum_r \hat{P}_t(r) \log \hat{P}_t(r)	\\
	&= -\sum_r P_t(r|S_1,\ldots,S_n) \log P_t(r|S_1,\ldots,S_n) \\
	&= H(P_t)
\end{align}
with probability 1.  This proves the first numbered statement.

We will use Lemma~\ref{lem:dct} to prove the first numbered statement.
For each $r$ the law of large numbers asserts 
$\lim_{m\to\infty} \hat{P}_t(r) = P_t(r|S_1,\ldots,S_n)$ with probability 1.
So for each $r$, 
\begin{equation}
	\lim_{m\to\infty} -\hat{P}_t(r) \log \hat{P}(r)
	= -P_t(r|S_1,\ldots,S_n) \log \bar{P}(r|S_1,\ldots,S_n) 
\end{equation}
and 
\begin{equation}
	\lim_{m\to\infty} -\hat{P}_t(r) \log \hat{P}_t(r)
	= -P_t(r|S_1,\ldots,S_n) \log P_t(r|S_1,\ldots,S_n) 
	\label{eq:fm}
\end{equation}
with probability 1.  
Fix a realization where (\ref{eq:sumconverges}--\ref{eq:fm}) hold and let 
\begin{equation*}
	f_m(r) := -\hat{P}_t(r) \log \hat{P}(r)
\end{equation*}
and
\begin{equation*}
	g_m(r) := -\hat{P}_t(r)[\log \hat{P}_t(r) - \log n] .
\end{equation*}
Then for each $r$ 
\begin{equation*}
	\lim_{m\to\infty} f_m(r) = -P_t(r|S_1,\ldots,S_n) \log \bar{P}(r|S_1,\ldots,S_n) 
	=: f(r)
\end{equation*}
and
\begin{equation*}
	\lim_{m\to\infty} g_m(r) = - P_t(r)[\log P_t(r) - \log n]
	=: g(r).
\end{equation*}
The sequence $f_m$ is dominated by $g_m$ because
\begin{align}
	0 \leq -\hat{P}_t(r) \log \hat{P}(r) 
	&= f_m(r) \\
	&= -\hat{P}_t(r) [\log \sum_{u=1}^n \hat{P}_u(r) - \log n] \\
	\label{eq:logineq}
	&\leq -\hat{P}_t(r) [\log \hat{P_t}(r) - \log n] \\
	&= g_m(r)
\end{align}
for all $r$, where (\ref{eq:logineq}) uses the fact that $\log x$ is an increasing function.  
From (\ref{eq:sumconverges}) we also have that $\lim_{m\to\infty} \sum_r g_m(r) = \sum_r g(r)$.
Clearly, $f_m$ and $g_m$ are summable.  Moreover $H(P_t) < \infty$ by assumption.  So
\begin{equation}
	\sum_r g(r) 
	= \sum_r - P_t(r) \log P_t(r) + \log n \sum_r P_t(r) 
	= H(P_t) + \log n < \infty
\end{equation}
and the conditions of Lemma~\ref{lem:dct} are satisfied.
Thus
\begin{equation}
	\label{eq:kldenom}
	\lim_{m\to\infty} \sum_r -\hat{P}_t(r) \log \hat{P}(r)
	= \lim_{m\to\infty} \sum_r f_m(r)
	= \sum_r f(r)
	= \sum_r -P_t(r) \log \bar{P}(r).
\end{equation}
Averaging over $t=1,\ldots n$ gives
\begin{equation}
	\hat{H}
	= \lim_{m\to\infty} \sum_r -\hat{P}(r) \log \hat{P}(r)
	= \sum_r -\bar{P}(r) \log \bar{P}(r)
	= H(\bar P).
\end{equation}
for realizations where (\ref{eq:sumconverges}--\ref{eq:fm}) hold.  
This proves the first numbered statement because the probability of all such realizations is 1.

For the third numbered statement we begin with the expansions
\begin{equation}
	\hat{D}(P_t||\bar{P})
	= \sum_r \hat{P}_t(r) \log \hat{P}_t(r) - \hat{P}_t(r) \log \hat{P}(r).
\end{equation}
and 
\begin{equation}
	D(P_t||\bar{P})
	= \sum_r P_t(r) \log P_t(r) - P_t(r) \log \bar{P}(r).
\end{equation}
The second numbered statement and (\ref{eq:kldenom}) imply
\begin{equation}
	\lim_{m\to\infty} \sum_r \hat{P}_t(r) \log \hat{P}_t(r) - \hat{P}_t(r) \log \hat{P}(r)
	= \sum_r P_t(r) \log P_t(r) - \sum_r P_t(r) \log \bar{P}(r)
\end{equation}
with probability 1.  This proves the third numbered statement.
\end{proof}

\begin{figure}[p]
	\centering
	\subfigure[Stimulus and response]{\label{fig:syntheticexperiment}
	\includegraphics[width=.95\linewidth]{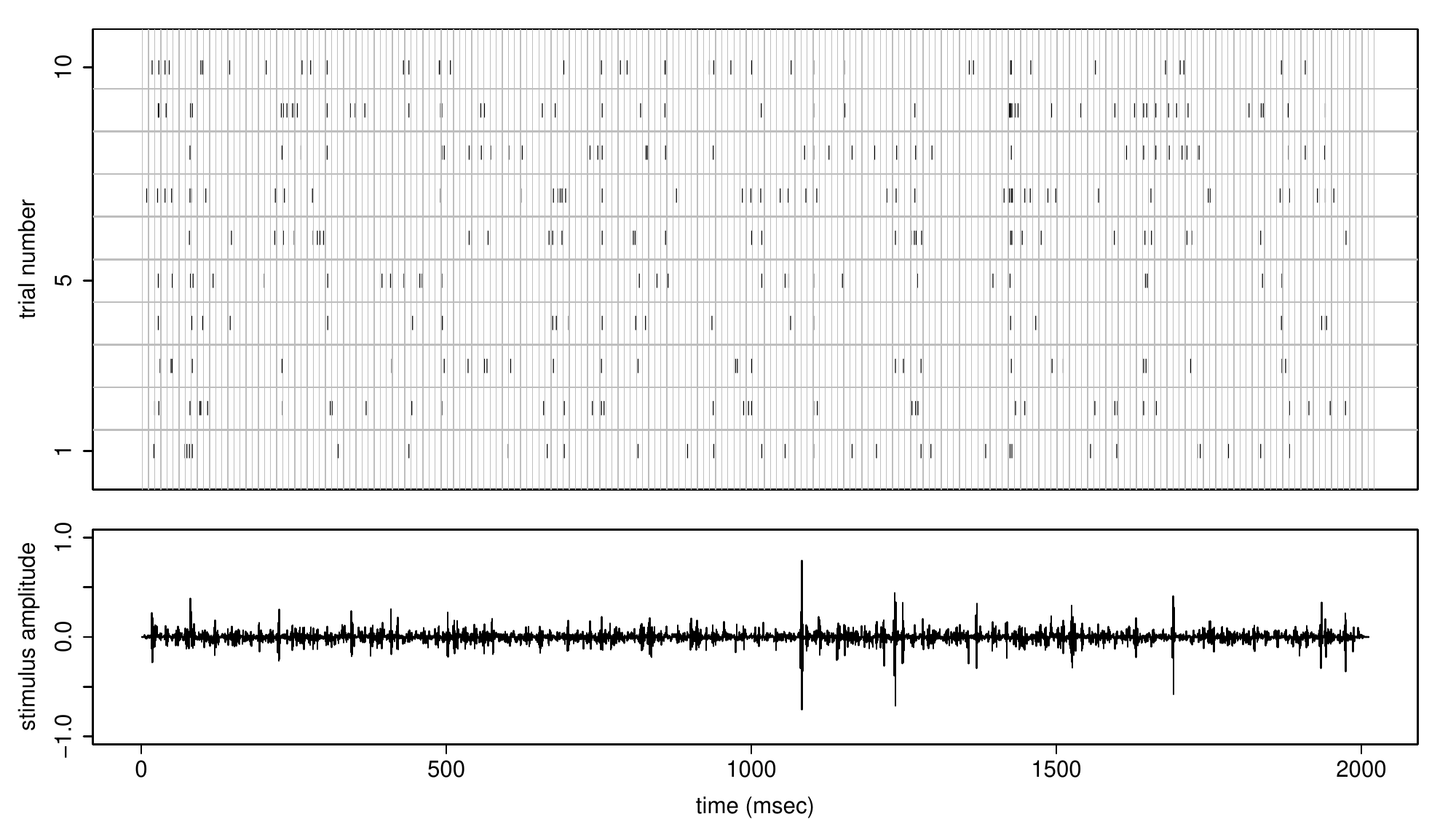}
	}
	\subfigure[Divergence plot]{\label{fig:syntheticdivergence}
	\includegraphics[width=.95\linewidth]{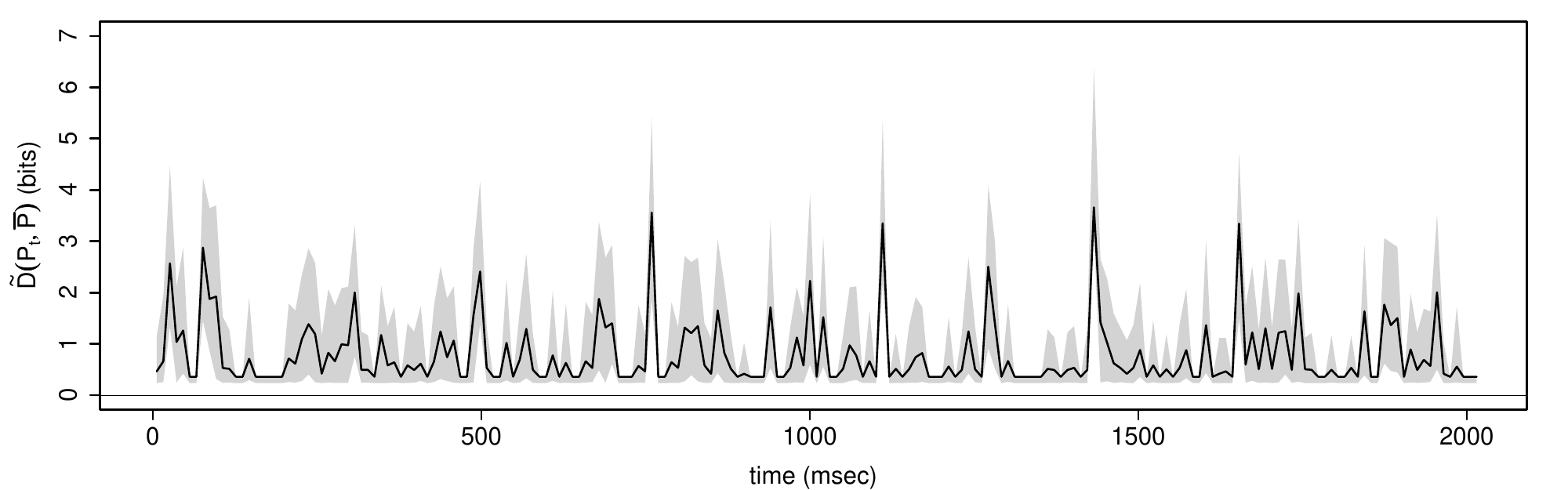}
	}

	\caption{
	\doublespacing
	(a) Raster plot of the response of the a Field L neuron of an adult male Zebra Finch (above) during the presentation
	of a synthetic audio stimulus (below) for 10 repeated trials.
	The vertical lines indicate boundaries of $L=10$ millisecond (msec) words formed at a 
	resolution of $dt = 1$ msec.  The data consists of 10 trials, each of duration 2000 msecs.
	(b) The coverage adjusted estimate (solid line) of $D(P_t,\bar{P})$ from the response shown above with 10 msec words.
	Pointwise 95\% confidence intervals are indicated by the shaded region and obtained by bootstrapping the trials 1000 times.
	The information estimate, 0.77 bits (per 10msec word, or 0.077 bits/msec), corresponds to the average value of the solid curve.
	}
\end{figure}

\begin{figure}[p]
	\centering
	\subfigure[Stimulus and response]{\label{fig:naturalexperiment}
	\includegraphics[width=.95\linewidth]{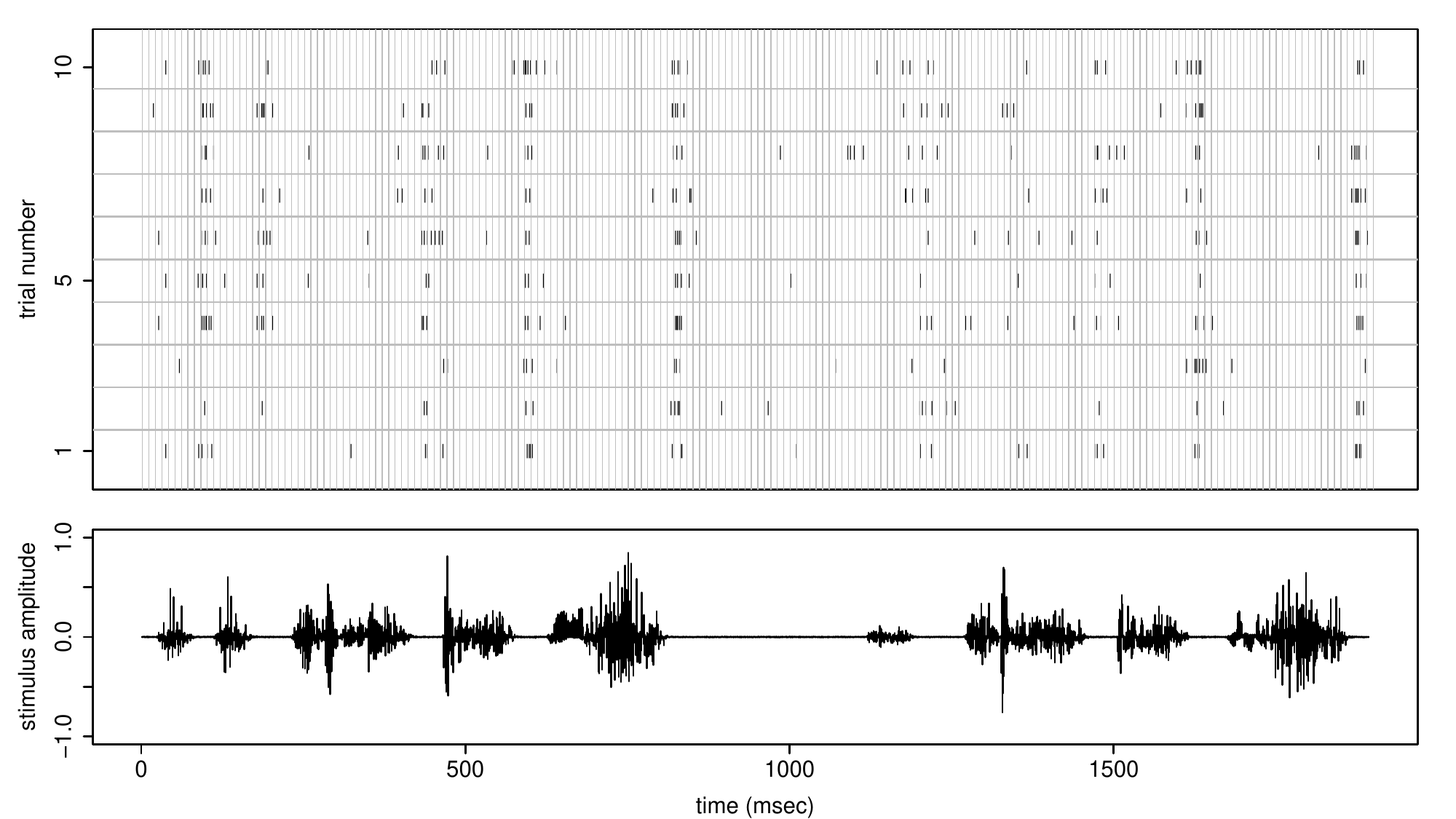}
	}
	\subfigure[Divergence plot]{\label{fig:naturaldivergence}
	\includegraphics[width=.95\linewidth]{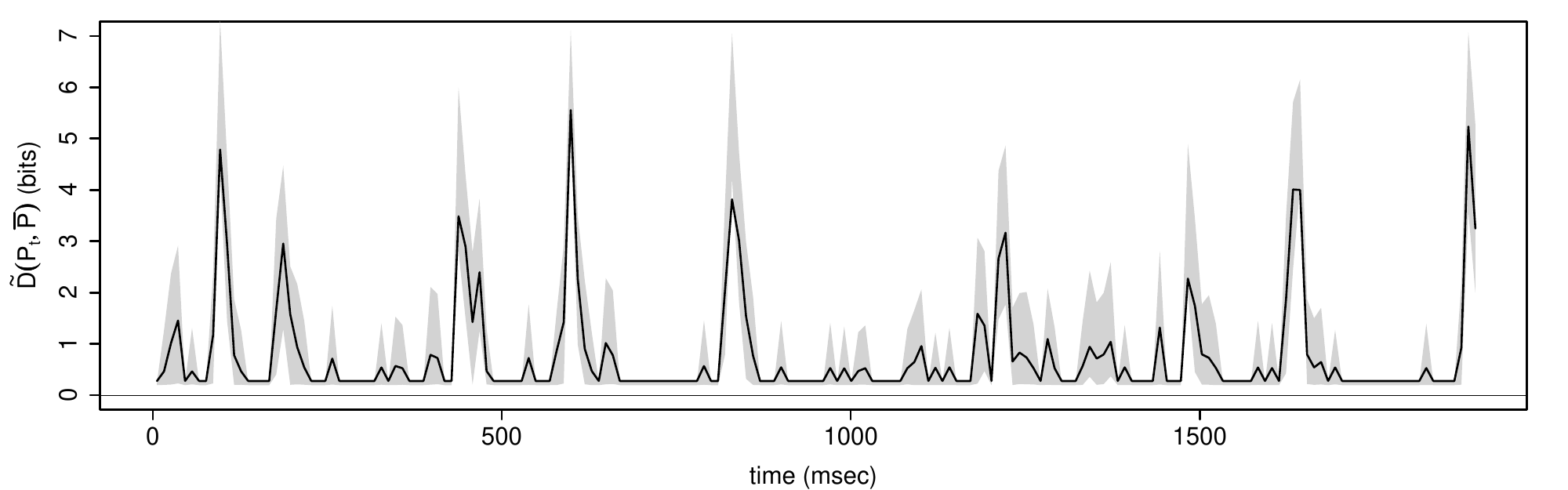}
	}

	\caption{
	\doublespacing
	(a) Same as in Figure~1, but in this set of trials the stimulus
	is a conspecific natural song.
	(b) The coverage adjusted estimate (solid line) of $D(P_t,\bar{P})$ from the response shown above.
	Pointwise 95\% confidence intervals are indicated by the shaded region and obtained by bootstrapping the trials 1000 times.  
	The information estimate, 0.76 bits (per 10 msec word or 0.076 bits/msec), corresponds to the average value of the solid curve.
	}
\end{figure}

\end{document}